\documentclass[11pt]{article}

\usepackage[left=1.25in,top=1in,right=1.25in,bottom=1in]{geometry}
\usepackage{graphicx}
\usepackage{setspace}
\usepackage[round]{natbib}
\usepackage{amsmath}
\usepackage{amssymb}
\usepackage{amsbsy}
\usepackage{amsthm}
\usepackage{paralist}
\usepackage{bm}
\usepackage{booktabs}
\usepackage{setspace}

\usepackage{mathtools}
\usepackage{cases}
\usepackage{algorithm}
\usepackage{algpseudocode}

\setcounter{MaxMatrixCols}{30}
\newtheorem{theorem}{Theorem}

\newtheorem{corollary}[theorem]{Corollary}

\newtheorem{lemma}[theorem]{Lemma}

\defaultenum{(1)}{}{}{}
\defaultleftmargin{3em}{}{}{}

\newcommand{\nc}{\newcommand}
\nc{\norm}{\mathcal{N}}
\nc{\gig}{\mathrm{GIG}}
\nc{\ig}{\mathrm{IN}}
\nc{\E}{\mathbb{E}}
\nc{\btheta}{{\bf{x}}}

\nc{\by}{{\bf y}}
\nc{\dd}[2]{\frac{\partial #1}{\partial #2}}
\nc{\bone}{{\bf 1}}
\nc{\bbI}{\mathbb{I}}

\nc{\bx}{\mathbf{x}}

\newcommand{\I}{\mathbb{I}}
\nc{\R}{\mathbb{R}}

\begin{document}

\title{Vertical-likelihood Monte Carlo}

\author{Nicholas G. Polson\footnote{University of Chicago Booth School of Business, ngp@chicagobooth.edu}\\
James G. Scott\footnote{University of Texas at Austin, james.scott@mccombs.utexas.edu.}
}

\date{June 2015}

\maketitle

\begin{abstract}

In this review, we address the use of Monte Carlo methods for approximating definite integrals of the form $Z = \int L(x) d P(x)$, where $L$ is a target function (often a likelihood) and $P$ a finite measure.  We present \textit{vertical-likelihood Monte Carlo}, which is an approach for designing the importance function $g(x)$ used in importance sampling.  Our approach exploits a duality between two random variables: the random draw $X \sim g$, and the corresponding random likelihood ordinate $Y\equiv L(X)$ of the draw.  It is natural to specify $g(x)$ and ask: what is the the implied distribution of $Y$?  In this paper, we take up the opposite question: what should the distribution of $Y$ be so that the implied importance function $g(x)$ is good for approximating $Z$?  Our answer turns out to unite seven seemingly disparate classes of algorithms under the vertical-likelihood perspective: importance sampling, slice sampling, simulated annealing/tempering, the harmonic-mean estimator, the vertical-density sampler, nested sampling, and energy-level sampling (a suite of related methods from statistical physics).  In particular, we give an alterate presentation of nested sampling, paying special attention to the connection between this method and the vertical-likelihood perspective articulated here.  As an alternative to nested sampling, we describe an MCMC method based on re-weighted slice sampling.  This method's convergence properties are studied, and two examples demonstrate the promise of the overall approach.

\vspace{0.5\baselineskip}
\noindent {\it Keywords:} Bayesian Model Selection, Marginal Likelihood, Monte Carlo Integration,
$1/k$-ensemble sampling, multicanonical sampling, MCMC, Tempering, Importance Sampling, Nested Sampling, Slice Sampling, Harmonic Mean Estimator
\end{abstract}

\newpage
\begin{spacing}{1.1}

\section{Introduction}

\subsection{Monte Carlo approximation of integrals}

Suppose we must approximate a definite integral of the form $Z = \int_{\mathcal{X}} L(x) \ d P(x)$, where $L(x)$ is some target function and $P(x)$ a finite measure.   In statistics, these integrals arise in the classical treatment of random-effects models, and in the analysis of incomplete or missing data.  They are also important intermediate quantities in Bayesian model selection, where $L(x)$ is the likelihood, $d P(x) = p(x) \ dx$ the prior, and $Z$ the marginal likelihood (or evidence).  Similar problems arise in statistical mechanics, where $\mathcal{X}$ is usually a discrete space, $P(x)$ is the counting measure, and $Z$ is called the partition function.

The classic Monte Carlo schemes for estimating $Z$ is importance sampling, which involves a weighted sum of likelihood evaluations at points $x^{(i)}$ drawn from a proposal $g(x)$:
\begin{equation}
\label{eqn:impsampling}
\hat{Z} = \sum_{i=1}^N q^{(i)} L(x^{(i)}) \; , \quad x^{(i)} \sim g(x) \, .
\end{equation}
In this paper, we advance a simple principle for choosing the proposal distribution $g(x)$ and calculating the weights $q^{(i)}$.

A major focus of our attention is the dual relationship between two random variables: $X \sim g(x)$, the random draw itself; and $L(X)$, the corresponding random likelihood ordinate of this draw.  Any choice of proposal $X \sim g(x)$ implies some distribution for $L(X)$.  We appeal to the reverse direction: namely, that specifying the distribution of the random likelihood ordinate $Y \stackrel{d}{=}  L(X)$ also implies a proposal distribution $g(x)$ that can be used in (\ref{eqn:impsampling}).  This paper systematically addresses the question: what should the distribution of $Y$ be in order to ensure that the corresponding proposal is a good one for estimating $Z$?

Our answer to this question turns out to unite a wide variety of seemingly disparate methods under a single conceptual framework.  We mention, in particular, the following seven:
\begin{compactenum}
\item importance sampling.
\item slice sampling \citep[e.g.][]{damien:etal:1999,neal:2003}.
\item methods based on powering down/annealing the likelihood or posterior, such as simulated annealing, the power-posterior method \citep{friel:pettitt:2008}, bridge sampling \citep{meng:wong:1996}, and path sampling \citep{gelman:meng:1998}.
\item the harmonic-mean estimator of \citet{newton:raftery:1994}.
\item the vertical-density sampler of \citet{troutt:1993}.
\item energy-level sampling, a generic term that we use to refer to a suite of related methods used in statistical physics \citep[e.g.][]{berg:neuhaus:1991,hesselbo:stinchcombe:1995,wang:landau:2001a,kou:zhou:wong:2006}.
\item the nested-sampling algorithm of \citet{skilling:2006}.
\end{compactenum}
Our goal in this paper is to identify and explain these surprising connections, especially the one with nested sampling.  We are therefore highly selective in our coverage of the literature.  Our goal is not to provide a comprehensive review of Monte Carlo integration.  Rather, it is to provide the reader with a single unifying principle for understanding certain major ideas in this area.

The main requirement of the vertical-likelihood approach is the choice of a weight function $w: \mathbb{R}^+ \to \mathbb{R}^+$, which serves the purpose of re-weighting the implied distribution of likelihood ordinate $Y$.  Our approach differs from traditional importance sampling, in that the proposal $g(x)$ is not defined explicitly, but rather implicitly via $w$.  Because this directs focus away from $g(x)$ and on to the distribution of the likelihood ordinate, we refer to it as vertical-likelihood perspective on Monte Carlo.  By this term, we do not mean a specific algorithm, but rather a perspective on designing proposal distributions for approximating definite integrals. 

The rest of the paper is organized as follows.  The remainder of Section 1 sets the notation, provides some necessary background and definitions, and proves a simple lemma that will be useful in subsequent sections.  Section \ref{sec:statmech} provides a soft introduction to the approach by reviewing several related algorithms from statistical physics.  Our goal here is to articulate a previously unappreciated connection between two historically distinct classes of methods: energy-domain algorithms, and temperature-domain algorithms based on latent variables.  This synthesis serves as the starting point of our approach.

Section \ref{sec:generalapproach} describes vertical-likelihood Monte Carlo in general terms, by relating the choice of importance function $g(x)$ to the one-dimensional weight function $w(u)$.  Several existing methods are shown to be special cases of the approach, corresponding to specific choices of this weight function.  Section \ref{sec:weightfunction} argues in favor of a particular principle to guide the choice of weight function; we call this principle the ``score-function heuristic.''  It also describes a weighted slice-sampling approach for implementing the method.  Section \ref{sec:nested} connects our approach with the nested-sampling algorithm of \citet{skilling:2006}.  Section \ref{sec:examples} gives two examples---a toy one-dimensional problem, and a much harder 50-dimensional problem---showing the excellent performance of the method.  Section \ref{sec:mixing} discusses the mixing properties of our MCMC sampler.  Section \ref{sec:discussion} concludes with some final remarks regarding possible extensions of the approach.

\subsection{Notation, background, and preliminaries}

Let $x \in \mathcal{X}$ be a $d$-dimensional variable, let $L: \mathcal{X} \rightarrow \R^+$ be the target function or likelihood, and let $P$ be a finite measure over $\mathcal{X}$.  Assume without loss of generality that $P(\mathcal{X}) = 1$, so that $P$ is a probability measure.

The quantity of interest is the normalizing constant or partition function
\begin{equation}
\label{eqn:normconstant}
Z = \int_{\mathcal{X}} L(x) \ d P(x) \, .
\end{equation}
One traditional approach for approximating $Z$ is importance sampling, which exploits the identity
$$
Z = \int_{\mathcal{X}} L(x) \ \frac{p(x)}{g(x)} \ g(x) \ dx \, .
$$
Thus one simulates $x^{(i)}$ from some proposal distribution (or importance function) $g(x)$; evaluates the likelihood of each point; and uses weights $q^{(i)} \propto p(x^{(i)})/g(x^{(i)})$, usually normalized to sum to 1.

The main difficulty of importance sampling is the lack of a generally accepted principle for choosing $g(x)$, which exerts a large effect on the variance of $\hat{Z}$. Many rules of thumb have been discussed in the literature.  See, for example, \cite{geweke:1989}, \citet[Section 4.9]{bergerbook2ed}, \citet[Example 7.12]{robert:casella:2004}, \citet{scottberger06}, and \citet[Section 6.3]{robert:2007}.   None of these guidelines, however, provide any guarantee of practical efficiency.

Our paper addresses this gap by advancing a simple principle for choosing $g(x)$.  The central idea is to re-weight the implied distribution of the likelihood ordinate: that is, the random variable $Y \stackrel{d}{=} L(X)$ defined by drawing $X \sim P(x)$ and evaluating the likelihood at the resulting draw.  Throughout the paper, we will use $x$ to denote an element of the original state space $\mathcal{X}$, and $y$ to denote a likelihood ordinate.

Suppose that $X \sim P(x)$ is a draw from the prior, and consider the relationship between the random variables $X$ and $Y \equiv L(X)$.  Characterizing this relationship requires defining two key functions.  First, there is the \textit{upper cumulant} of the prior, denoted $Z(y)$.  We define this as follows.  Let $F_Y(y) = \mathbb{P}\{L(X) \leq y\}$ be the cumulative distribution function of $Y$.  Now define
\begin{equation}
\label{eqn:Zu}
Z(y) =1-F_Y(y) =  \int_{L(x) > y} d P(x)
\end{equation}
to be the complementary CDF of $Y$.  Clearly $Z(y)$ has domain $y \in \mathbb{R}^+$ and range $s \in [0,1]$ and is nonincreasing in $y$.

The second key function is the pseudo-inverse of $Z(y)$, denoted $\Lambda(s)$ and defined as
\begin{equation}
\label{eqn:LambdaZ}
\Lambda(s) = \sup\{y: Z(y) > s\} \, ,
\end{equation}
which, like $Z(y)$, is nonincreasing.  Intuitively, $\Lambda(s)$ gives the value $y$ such that $s$ is the fraction of prior draws with likelihood values larger than $y$.

The functions $Z(y)$ and $\Lambda(s)$ will play a key role in our discussion.  Specifically, each yields a useful identity for the normalizing constant $Z$ that collapses the full $d$-dimensional integral into a one-dimensional integral. We collect these two identities in the following lemma.  Below and throughout the paper, we use $\I$ to denote the indicator function: $\I(a) = 1$ if $a$ is true, and $0$ otherwise.  We also implicitly assume the necessary conditions to permit the interchange of differentiation and integration, and to switch the order of integration inside an iterated integral.

\begin{lemma}
\label{lemma:Zidentities}
The normalizing constant $Z$ in (\ref{eqn:normconstant}) can be expressed in two alternate ways:
$$
Z = \int_0^{\infty} Z(y) \ dy = \int_0^1 \Lambda(s) \ ds \, .
$$
\end{lemma}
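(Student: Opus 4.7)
The plan is to prove the two identities separately, both as applications of Tonelli's theorem (layer-cake style), with the second identity relying on a pseudo-inverse duality between $Z(y)$ and $\Lambda(s)$.

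For the first identity, I would start from the elementary observation that, for any $x\in\mathcal{X}$, $L(x)=\int_0^\infty \I(L(x)>y)\,dy$, since $L(x)\in\R^+$. Substituting this into \eqref{eqn:normconstant} and applying Tonelli's theorem to swap the order of integration gives
$$
Z = \int_{\mathcal{X}} \int_0^\infty \I(L(x)>y)\,dy\,dP(x) = \int_0^\infty \int_{\mathcal{X}} \I(L(x)>y)\,dP(x)\,dy = \int_0^\infty Z(y)\,dy,
$$
by the definition \eqref{eqn:Zu} of $Z(y)$.

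For the second identity, the key step is the equivalence
$$
y < \Lambda(s) \iff s < Z(y),
$$
which follows from the definition \eqref{eqn:LambdaZ} of $\Lambda$ as the pseudo-inverse of the nonincreasing function $Z$. Given this, for any $s\in(0,1)$ I can write $\Lambda(s)=\int_0^\infty \I(y<\Lambda(s))\,dy=\int_0^\infty \I(s<Z(y))\,dy$, and then apply Tonelli again:
$$
\int_0^1 \Lambda(s)\,ds = \int_0^1 \int_0^\infty \I(s<Z(y))\,dy\,ds = \int_0^\infty \int_0^1 \I(s<Z(y))\,ds\,dy = \int_0^\infty Z(y)\,dy,
$$
where the last equality uses $Z(y)\in[0,1]$. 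Combined with the first identity, this yields $Z=\int_0^1\Lambda(s)\,ds$.

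The main obstacle is justifying the pseudo-inverse equivalence $y<\Lambda(s)\iff s<Z(y)$ when $Z$ has jumps or flat regions (so that the genuine inverse does not exist). This is essentially a standard fact about right-continuous nonincreasing functions and their generalized inverses: if $s<Z(y)$ then $y\in\{y':Z(y')>s\}$, so $y\le\sup\{y':Z(y')>s\}=\Lambda(s)$, and conversely if $y<\Lambda(s)$ there exists $y'>y$ with $Z(y')>s$, so by monotonicity $Z(y)\ge Z(y')>s$. The boundary case $y=\Lambda(s)$ occurs on a set of Lebesgue measure zero in $(s,y)$ (since $\Lambda$ is monotone and thus has at most countably many flat pieces), so it does not affect the integral. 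Once this equivalence is in hand, the remainder of the argument is just a routine application of Tonelli.
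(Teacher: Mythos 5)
Your proof is correct and follows essentially the same route as the paper's: the layer-cake representation $L(x)=\int_0^\infty \I(L(x)>y)\,dy$ plus Tonelli for the first identity, and the duality $s<Z(y)\iff y<\Lambda(s)$ plus Tonelli for the second. Your extra care in justifying the pseudo-inverse equivalence (and dismissing the measure-zero boundary case $y=\Lambda(s)$) goes slightly beyond the paper, which simply asserts that equivalence, but the argument is the same in substance.
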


\begin{proof}
For the first identity, we have
\begin{eqnarray}
Z &=& \int_{\mathcal{X}} L(x) \ d P(x) \nonumber \\
&=& \int_{\mathcal{X}} \int_0^{\infty}  \I\{y < L(x)\}  \ dy \ d P(x) \nonumber \\
&=& \int_0^{\infty} Z(y) \ dy  \label{eqn:Zin1D} \, 
\end{eqnarray}
by interchanging the order of integration.  Although $Z(y) = 0$ for all $y > \sup_{\mathcal{X}} L(X)$, we still write this integral with an infinite upper boundary to maintain a consistent notation across problems.

The second identity can be derived from the first, by exploiting the fact that $s < Z(y)$ if and only if $y < \Lambda(s)$:
\begin{eqnarray}
Z &=&  \int_0^{\infty} Z(y) \ dy \nonumber \\
&=&  \int_0^{\infty} \int_0^1 \mathbb{I}\{ s < Z(y) \} \ ds \ dy \nonumber \\
&=&  \int_0^{\infty} \int_0^1 \mathbb{I}\{ y < \Lambda(s) \} \ ds \ dy \nonumber \\
&=&   \int_0^1 \Lambda(s) \ ds \label{eqn:Zin1D2} \, .
\end{eqnarray}
\end{proof}

The second identity is important in nested sampling \citep{skilling:2006}, a point which we shall revisit at length in Section \ref{sec:nested}.

\section{The connection with statistical mechanics}

\label{sec:statmech}

We begin with the case of a discrete space $\mathcal{X}$, motivated by the origins of the Monte Carlo method in statistical mechanics.  Let $x \in \mathcal{X}$ be a discrete state variable, and let $h(x)$ be an energy function or Hamiltonian of the state.  Throughout, we will use the example of the Boltzmann distribution
$$
p_T(x) = \frac{1}{Z(T)} \exp\{-h(x)/T\} \, ,
$$
which describes the behavior of a thermodynamical system in thermal equilibrium at temperature $T$.

\citet{kou:zhou:wong:2006} distinguish between two types of features associated with the Boltzmann distribution: temperature-domain features, which are functions of $T$; and energy-domain features, which are functions of the energy level $s$.  Historically, separate classes of algorithms have been used to estimate these two kinds of features.  Our proposed approach combines aspects of algorithms from both domains.  Specifically, it involves the use of auxiliary variables (historically associated with temperature-domain methods), together with the idea of rebalancing the sampler toward higher energy levels (historically associated with energy-domain methods).  We therefore give a brief review of these two sets of ideas, before remarking on an important connection between them.

\paragraph{Temperature-domain features.}   Examples of temperature-domain features, which are functions of $T$, include the partition function
$$
Z(T) = \sum_{x \in \mathcal{X}} \exp\{-h(x)/T\} \, ,
$$
and the Boltzmann average
$$
\mu_f(T) = \E \{ f(x) \}  = \frac{1}{Z(T)} \sum_{x \in \mathcal{X}} f(x)  \exp\{-h(x)/T\} 
$$
of some state function $f(x)$.  These features can be used to calculate certain thermodynamic properties of the system, such as the free energy and the specific heat.

During the second half of the 20th century, much work in statistical mechanics focused on the use of Monte Carlo methods for approximating temperature-domain features.  Most notably, both the original Metropolis algorithm and Hastings' modification \citep{metropolis:etal:1953,hastings:1970} were designed to sample from the Boltzmann distribution for fixed $T$ by means of a Markov chain.  Recall the basic algorithm: let $q(y \mid x)$ be a proposal distribution specifying the probability of proposing a move to state $y$, given the current state $x$.  A proposed move from state $x$ to $y$ is accepted with probability
$$
\alpha = \min \left\{ 1, \frac{p_T(y) q(x \mid y)}{p_T(x) q(y \mid x)} \right\} \,
$$
and otherwise rejected, leading to a Markov chain $\{x_1, x_2, \ldots \}$ whose stationary distribution is $p_T$.  Under suitable conditions, the ergodic average $N^{-1} \sum_{i=1}^N f(x_i)$ can be used to approximate the Boltzmann average of $f(x)$.

A well-known problem with the Metropolis--Hastings algorithm arises when $h(x)$ has many local minima, especially minima separated by high-energy (low-probability) barriers.  In such cases the Markov chain can become stuck in a local minimum and fail to generate samples from the correct distribution with practical runtimes.

Many other temperature-domain techniques have been invented to address this problem, such as parallel tempering \citep{geyer:1991}.  Here, we call attention to one especially relevant class of temperature-domain algorithms: those based on the introduction of auxiliary variables. This includes the Swendsen--Wang algorithm \citep{swendsen:wang:1987}, along with other a wide variety of latent-variable MCMC schemes that have been used to overcome slow mixing in lattice models \citep{higdon:1998}.  These methods all involve augmenting the state variable $x$ by an additional set of variables $u$ such that the higher-dimensional joint distribution $p(x,u)$ has the correct marginal distribution in $x$, and then iteratively sampling $(x \mid u)$ and $(u \mid x)$.  This basic scheme underlies slice sampling, along with many useful Markov-chain Monte Carlo samplers in Bayesian inference \citep[e.g.][]{albert:chib:1993,polson:scott:windle:2012a}.

\paragraph{Energy-domain features and re-balancing schemes.}

Examples of energy-domain features include the microcanonical distribution over the equi-energy surface $\{x: h(x) = s\}$, as well as the microcanonical average of a state function,
$$
\nu_f(s) = \E \{f(x) \mid h(x) = s\} \, ,
$$
which is independent of temperature.  Another energy-domain feature is the density of states.  In the discrete case, this is the number of states with a given energy level: $N(s) = \#\{x: h(x) = s \}$.  In the case where $\mathcal{X}$ is a continuous state space, $N(s)$ is the function such that the volume of the set $\{x: h(x) \in (s, s + ds) \}$ is approximately $N(s) ds$.

The statistical-mechanics community has developed a wide class of Monte Carlo methods to approximate energy-domain features.  We refer to these collectively as energy-level samplers.  These methods all share the goal of biasing the draws toward higher-energy states by means of an iterative re-balancing scheme.

To motivate these methods, let $X$ be a draw from the Boltzmann distribution, assuming $T=1$ without loss of generality.   Let $\eta = h(X)$ be the corresponding random energy level, with distribution
\begin{equation}
\label{eqn:energyleveldensity1}
P(\eta = s) = \sum_{x: h(x) = s } e^{-s} = e^{-s} \ N(s) \, .
\end{equation}
The multicanonical sampler of \citet{berg:neuhaus:1991} attempts to rebalance the sampler so that the implied energy distribution becomes flat: $P(\eta = s) \propto \mbox{constant}$.  As (\ref{eqn:energyleveldensity1}) suggests, this is accomplished by sampling states $x$ with weight inversely proportional to the density of states $N(s)$.

If the density of states is unknown, the Wang--Landau algorithm \citep{wang:landau:2001a,wang:landau:2001b} provides a suitable variation.  It involves estimating $N(s)$ via an iterative re-balancing approach, and has been generalized to a wider class of statistical problems \citep{bornn:etal:2012}.  For a short overview of the adaptive Wang--Landau algorithm, see the Appendix.

An even more extreme re-balancing is the $1/k$-ensemble sampler \citep{hesselbo:stinchcombe:1995}.  Let
\begin{equation}
\label{eqn:cumulativedensitystates}
Z(s) = \#\{x: h(x) \leq s\} = \sum_{t \leq s} N(t)
\end{equation}
define the cumulative number of states with energy as least as small as $s$.  In the $1/k$-ensemble sampler, states are sampled with weight proportional to $Z(s)$, rather than $N(s)$ as in the multicanonical sampler.  This makes it even easier for the sampler to traverse high-energy (low-probability) regions of the state space.

\paragraph{The connection with latent-variable methods.}  Here we note a connection between auxiliary-variable methods and energy-level samplers that underlies our recommended approach for choosing an importance function.  Motivated by the latent-variable scheme at the heart of slice sampling \citep{damien:etal:1999}, consider the joint distribution
\begin{equation}
\label{eqn:jointenergy}
p(x,u) \propto w(u) \ \I\{ u \geq h(x) \} \, .
\end{equation}
Let $(X,U)$ be a random draw from this joint distribution, and just as above, consider the implied distribution over the random energy level $h(X)$:
$$
p(h(X) = s) \propto \sum_{x: h(x) = s} w(s) = w(s) N(s) \, .
$$
If $w(s) = e^{-s}$, we recover the canonical ensemble: that is, the distribution over the energy level implied by the original Boltzmann distribution (\ref{eqn:energyleveldensity1}).  On the other hand, if we set $w(s) = 1/N(s)$, we see that $P(h(X) = s)$ is now constant in $s$, as in the multicanonical sampler and Wang--Landau algorithm.  Finally, if we set $w(s) = 1/Z(s)$ as in (\ref{eqn:cumulativedensitystates}), then we obtain
$$
P(h(X) = s) \propto N(s)/Z(s) \, ,
$$
as in the $1/k$-ensemble sampler of \citet{hesselbo:stinchcombe:1995}.

To summarize: many different sampling schemes historically used for energy-domain features can be interpreted as different choices for the weight function in a joint distribution defined via an auxiliary variable (\ref{eqn:jointenergy}).  The particular form of this joint distribution suggests an interesting connection between slice sampling and energy-level sampling that can be usefully exploited.

\section{The vertical-likelihood perspective}

\label{sec:generalapproach}

\subsection{Overview}


The vertical-likelihood approach shares the idea of biasing the sampler towards low-probability regions.  But there are several differences with the energy-level sampling methods just described.  First, the underlying state space $\mathcal{X}$ is often continuous in statistical applications, which introduces complications not present in the discrete case.  Second, our re-weighting scheme is based on the likelihood ordinate $Y = L(X)$, rather than the energy function, or equivalently the log of the likelihood.  This allows us to connect the vertical-likelihood perspective with many other methods for estimating normalizing constants, including nested sampling.
 Finally, and most notably, our method does not involve an iterative scheme to estimate the density of states, as in the multicanonical or $1/k$-ensemble sampler.  Instead, it can be seen as a generalization of slice sampling, where the slice variable is the analogue of the energy level.  
 
Consider the following latent-variable representation of the likelihood $L(x)$, used both in slice sampling and the Swendsen--Wang algorithm \citep{higdon:1998,damien:etal:1999,neal:2003}:
$$
L(x) = \int_0^{\infty} \I \{0 < u < L(x) \} \ d u = \int_0^{L(x)} d u \, ,
$$
where $\I$ is the indicator function.  This allows us to write the posterior distribution $\pi(x) = p(x) L(x) / Z$ as the marginal of the joint distribution
\begin{equation}
\label{eqn:slicejoint}
\pi(x, u) = \frac{\I\{0 < u < L(x)\} p(x)}{Z} \, ,
\end{equation}
where the latent variable $u$ indexes the likelihood ordinate $L(x)$.  As in (\ref{eqn:jointenergy}), the key step in our approach is the introduction of a weight function whose purpose is to rebalance this joint distribution toward lower likelihood ordinates.

Specifically, let $w: \R^+ \rightarrow \R^+$ be a weight function, and let $W(u) = \int_0^u w(s) \ ds$ be the corresponding cumulative weight function.  Define the weighted joint distribution
\begin{equation}
\label{eqn:weightedslice}
\pi_w(x, u) =  \frac{w(u) \  \I\{0 < u < L(x)\} \ p(x)  }{Z_w} \, ,
\end{equation}
where
\begin{eqnarray}
Z_w &=& \int_{\mathcal{X}} \Omega_w(x) \ p(x) \ d x  \nonumber \\ 
\Omega_w(x) &=& \int_{0}^{\infty} w(u) \ \I\{0 < u < L(x)\} \ d u = \int_0^{L(x)} w(u) \ d u = W(L(x)) \label{eqn:generalOmw} \, .
\end{eqnarray}
Neither the marginals nor the conditionals of $\pi_w(x, u)$ correspond to those of the joint posterior (\ref{eqn:slicejoint}), unless $w(u) = 1$.  Instead, they define a working model whose marginal distributions are easily shown to be
\begin{eqnarray}
\pi_w(x) &=& \frac{p(x) W(L(x))}{Z_w} \label{eqn:marginal1} \\
\pi_w(u) &=& \frac{w(u) Z(u)}{Z_w} \label{eqn:marginal2} \, ,
\end{eqnarray}
where $Z(u) = \int_{L(x) > u} p(x) d x$ is the analogue of the cumulative density of states in (\ref{eqn:cumulativedensitystates}), with the prior distribution $p(x)$ in place of the counting measure.

Our approach is to specify a weight function $w(u)$ in (\ref{eqn:weightedslice}), draw values $x^{(i)} \sim \pi_w(x)$ from the corresponding marginal distribution (\ref{eqn:marginal1}), and form an estimate of $Z$ via importance sampling (\ref{eqn:impsampling}), with weights
\begin{equation}
\label{eqn:sliceweights}
q^{(i)} \propto \frac { p(x^{(i)} ) }{ \pi_w(x^{(i)}) }  \propto [W\{L( x^{(i)} ) \} ]^{-1} \, 
\end{equation}
normalized to sum to 1.  The choice of proposal $g(x)$ over the original space is thereby reduced to the one-dimensional weight function $w(u)$.

\subsection{Special cases}

This approach raises many practical questions, especially regarding the choice of weight function.  These will be addressed in subsequent sections.  First, however, we will show that several existing Monte Carlo methods correspond to special choices of the weight function in (\ref{eqn:weightedslice}).  

We begin with slice sampling.  Observe that the conditionals corresponding to the re-weighted joint distribution are
\begin{eqnarray}
\pi_w(x \mid u) &=& \frac{ p(x) \ \I\{ L(x) \geq u \} } {Z(u) } \label{eqn:conditional1} \\
\pi_w(u \mid x) &=& \frac{w(u) \ \I\{ 0\leq u \leq L(x) \}  }{W(L(x))} \, . \label{eqn:conditional2}
\end{eqnarray}
where $Z(u) = \int_{L(x) > u} p(x) dx$.  Now consider the special case of a uniform weight function, $w(u) \equiv 1$.  With this choice, $W(L(x))$ and $Z_w$ revert to the ordinary likelihood and marginal likelihood, respectively, and the ordinary slice-sampling conditionals are recovered \citep[c.f.][]{damien:etal:1999}.  In this case, the importance function is just the original posterior, the weights in (\ref{eqn:sliceweights}) are $q^{(i)} \propto  1/ L( x^{(i)} )$, and the method reduces to the harmonic-mean estimator \citep{newton:raftery:1994}.

This connection provides insight on the crucial role played by the weight function.  In particular, the harmonic mean estimator of $Z$ typically has infinite variance \citep[e.g.][]{raftery:valencia:2007}, and is known to converge to a one-sided stable law with characteristic exponent $1 < \alpha < 2$ \citep{wolpert:schmidler:2012} under quite general conditions.  The problem with the harmonic mean is that very large changes to the prior $p(x)$ produce correspondingly large changes in $Z$, but usually produce only minor changes in the posterior distribution.  Because the harmonic-mean method uses posterior samples to estimate $Z$, it is inappropriately insensitive to large changes in the prior.  In light of this, it is clear that $w(u)$ should be a decreasing function, so as to heighten the sensitivity of the importance function to the prior.

A second special case is the parametrized weight function $w(u) = au^{a-1}$ for $a \in (0,1)$.  Then $W(u) = u^a$, and the implied importance function is the posterior that arises from using a powered-down version of the likelihood:
$$
\pi_w(x) = \frac{p(x) L(x)^a}{Z_w} \, .
$$
This corresponds to the importance function used in the power-posterior method of \citet{friel:pettitt:2008}, and is similar to simulated annealed and annealed importance sampling \citep{neal:2001}.

This also connects our method with bridge sampling \citep{meng:wong:1996} and path sampling \citep{gelman:meng:1998}.  To see this, consider adding a point mass at $u=0$ to the weight function: $w(u) = c \delta_0 + (1-c) w^\star(u) $ where $\delta$ is a Dirac measure.  If $w^\star(u)=1$ this leads to a mixture of the prior and the annealed posterior distribution as the importance function.  On the other hand, if $w^\star(u)=au^{a-1}$, we have a mixture of the prior and the posterior corresponding to a powered-down version of the likelihood.

The power-posterior and related methods are archetypal of most approaches to importance sampling: they construct a proposal distribution by manipulating the posterior density so that it will retain a similar shape, but with higher variance or heavier tails.  Our representation provides a complementary view of these methods: as members of a wider family of parametric weight functions $w(u)$ that bias the distribution of the likelihood ordinate toward lower values.

\section{The weight function}

\label{sec:weightfunction}

\subsection{A score-function heuristic}

Now consider the choice of weight function $w(u)$.  To motivate our recommended approach, return to the first identity for $Z$ in Lemma \ref{lemma:Zidentities}:
$$
Z = \int_{\mathcal{X}} L(x) \ dP(x) = \int_0^{\infty} Z(y) \ dy \, .
$$
This looks suspiciously like a free lunch, in that the original $d$-dimensional integral has been collapsed into the one-dimensional space of likelihood ordinates $y \in \mathbb{R}^+$.  A reasonable question is: having collapsed the integral into a single dimension, can we just use a simple method, such as Monte Carlo integration or adaptive quadratic?

The answer is a very definite no.  The reason is that the upper cumulant
$$
Z(y) = \int_{x: L(x) > y} d P(x)
$$
changes much more rapidly over its domain than does $L(x) p(x)$.  Moreover, the behavior of the integrand at both boundaries (but especially near the maximum value of the likelihood function) contributes significantly to the value of the integral.  A quadrature method is doomed to failure, unless the grid is chosen with extreme care.  Put simply, $Z(y)$ is much too spiky to estimate this integral by standard methods, even though it is ``only'' one-dimensional.  (In fact, as we describe in the next section, nested sampling can be viewed as a stochastic method of choosing an extremely careful grid for quadrature.)  

As a result, we never use this identity for explicitly calculating $Z$.  We do, however, use it to provide intuition regarding the choice of proposal density.  Specifically, the identity suggests a useful guideline: if $X$ is a draw from the proposal, the corresponding likelihood value $L(X)$ should concentrate with high probability in regions where $Z(y)$ changes rapidly, relative to its value.  Otherwise, we are unlikely to generate samples $y$ in regions of likelihood-ordinate space that contribute the most towards the overall value of the integral $ \int_0^{\infty} Z(y) \ dy$.

This motivates a simple score-function heuristic.  Regions where the score function $d/dy \log Z(y) = Z'(y)/Z(y)$ is large correspond precisely to regions of likelihood-ordinate space where $Z(y)$ changes rapidly relative to its value.  This is where our samples should concentrate in order to ``zoom in'' on $Z(y)$'s largest contributions to the overall value of $Z$.  Therefore, the importance function  $X \sim g(x)$ should be chosen so that the implied distribution of the likelihood ordinate $Y \equiv L(X)$ has density
$$
f(y) \propto \frac{Z'(y)}{Z(y)} \, ,
$$
at least in an approximate sense that we will soon make more precise.

\begin{figure}
\begin{center}
\includegraphics[width=5in]{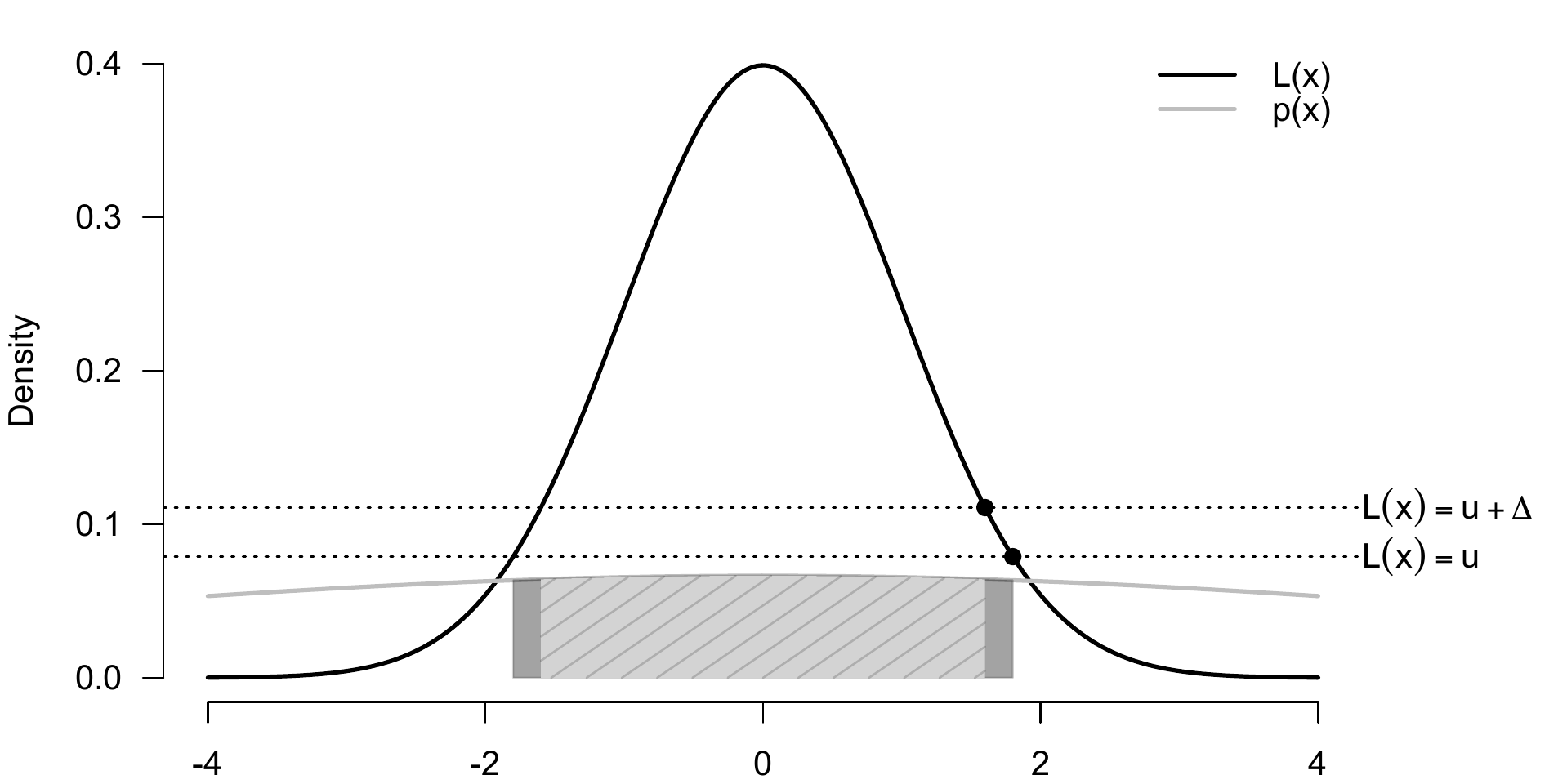}
\caption{
\label{fig:scorefunction}
A graphical depiction of the score-function heuristic.  The light-grey area with diagonal lines shows the quantity $Z(u + \Delta)$, while the dark-grey area shows the difference $Z(u) - Z(u + \Delta)$.
}
\end{center}
\end{figure}

Figure \ref{fig:scorefunction} gives a graphical depiction of the score-function heuristic.  The two horizontal dotted lines depict likelihood slices at $u$ and $u + \Delta$.  The light-grey area with diagonal lines depicts the quantity $Z(u + \Delta)$, while the dark grey area depicts the difference $Z(u) - Z(u+ \Delta)$; thus $Z(u)$ is the sum of the two areas.  The score function heuristic says that we should choose the distribution of the likelihood ordinate $Y$ to satisfy
$$
\mathbb{P}\{Y \in (u, u+\Delta) \} \approx \frac{Z(u) - Z(u + \Delta)}{Z(u)} \approx \Delta \cdot f(u) \, ,
$$
or the ratio of the dark-grey area to the light-grey area with diagonal lines.  In the limit as $\Delta \to 0$, this becomes $f(y) \propto -Z'(y)/Z(y)$.

The following theorem shows how this heuristic can be operationalized.  It characterizes the distribution of the likelihood ordinate under random sampling from the proposal in (\ref{eqn:marginal1}).
\begin{theorem}
\label{theorem:ordinatedistn}
Let $w: \R^+ \rightarrow \R^+$ be a weight function, and let $W(u) = \int_0^u w(s) ds$ be the corresponding cumulative weight function.
Suppose that $X \sim g_w(x)$, defined as in (\ref{eqn:marginal1}) for prior $p(x)$ and likelihood $L(x)$.
Let $Y \equiv L(X)$ be the random likelihood ordinate when $X \sim g_w(x)$, and let $Z(u)$ be the prior measure of the set $\{ x: L(x) > u\}$ as in (\ref{eqn:Zu}), with derivative $Z'(u)$.  Then the density of $Y$ under $g_w$ is
\begin{equation}
\label{eqn:ordinatedistn}
f(y) = - \frac{W(y) Z'(y)} {Z_w} \, .
\end{equation}
\end{theorem}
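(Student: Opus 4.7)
The plan is to compute the pushforward density of $Y = L(X)$ when $X \sim g_w(x) = p(x) W(L(x))/Z_w$, and the cleanest route is to exploit the fact that the Radon--Nikodym derivative $g_w(x)/p(x) = W(L(x))/Z_w$ depends on $x$ only through $L(x)$. So the density of $Y$ under $g_w$ should be obtained by multiplying the density of $Y$ under the prior $p$ by the factor $W(y)/Z_w$.

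First I would establish the density of $Y$ when $X \sim p$. By the definition of the upper cumulant in (\ref{eqn:Zu}), we have $\mathbb{P}\{Y > y\} = Z(y)$, so $F_Y(y) = 1 - Z(y)$ and the density of $Y$ under $p$ is $-Z'(y)$ (nonnegative, since $Z$ is nonincreasing). Then I would argue, either by a direct change-of-variable / disintegration along the level sets $\{L(x) = y\}$, or simply by noting that for any Borel set $A \subset \mathbb{R}^+$,
\begin{align*}
\mathbb{P}_{g_w}\{Y \in A\}
&= \int_{\{L(x) \in A\}} \frac{W(L(x))}{Z_w} \, p(x)\, dx
= \int_A \frac{W(y)}{Z_w} \bigl(-Z'(y)\bigr) dy,
\end{align*}
which yields $f(y) = -W(y) Z'(y)/Z_w$ as claimed.

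If I did not want to rely on a disintegration argument, a fully elementary backup route is to compute $\mathbb{P}_{g_w}\{Y > y\}$ directly and differentiate. Using the layer-cake identity $W(L(x)) = \int_0^\infty w(u)\, \I\{u < L(x)\}\, du$ and interchanging integrals via Fubini,
\begin{align*}
Z_w \cdot \mathbb{P}_{g_w}\{Y > y\}
&= \int_0^\infty w(u) \int_{\{L(x) > \max(y,u)\}} p(x)\, dx\, du \\
&= \int_0^y w(u)\, Z(y)\, du + \int_y^\infty w(u)\, Z(u)\, du \\
&= W(y)\, Z(y) + \int_y^\infty w(u)\, Z(u)\, du.
\end{align*}
Differentiating in $y$, the two boundary terms $w(y) Z(y)$ cancel, leaving $W(y) Z'(y)$, and hence $f(y) = -\frac{d}{dy}\mathbb{P}_{g_w}\{Y > y\} = -W(y) Z'(y)/Z_w$.

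The only real obstacle is justifying the disintegration of $p$ along level sets of $L$ in the first (slick) approach, which tacitly assumes enough regularity for such a conditional decomposition; the Fubini computation in the backup approach sidesteps this and uses only the interchange-of-integration hypotheses already declared in the paper, so I would likely present the proof that way.
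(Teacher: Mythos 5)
Your backup (Fubini) argument is exactly the paper's own proof: it writes $W(L(x))=\int_0^\infty w(u)\,\I\{u<L(x)\}\,du$, interchanges the order of integration, splits at $\max(y,u)$ to get $W(y)Z(y)+\int_y^\infty w(u)Z(u)\,du$, and differentiates so that the two $w(y)Z(y)$ boundary terms cancel, leaving $f(y)=-W(y)Z'(y)/Z_w$. Your first, slicker route is also sound and needs no genuine disintegration: since the theorem already assumes $Z$ is differentiable, the pushforward of $p$ under $L$ has density $-Z'(y)$, and $\int_{\{L(x)\in A\}} h(L(x))\,p(x)\,dx=\int_A h(y)\bigl(-Z'(y)\bigr)\,dy$ applied with $h(y)=W(y)/Z_w$ is just the standard change-of-variables formula for image measures.
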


This result, together with the score-function heuristic, suggests that the cumulative weight function should be inversely proportional to $Z(y)$.  In fact, we suggest the following choice for $\eta \in (0,1]$:
\begin{equation}
\label{eqn:defaultWeightFunction}
W(u) = \frac{1}{\max\{\eta, Z(u)\}}
\end{equation}
This defines a family of importance functions that all meet the score-function heuristic on a restricted range of likelihood ordinates, with the prior ($\eta = 1$) as a boundary case.  Note that the truncation of the denominator at $\eta$ is to ensure the propriety of the corresponding importance function.  As a guideline, we recommend that $\eta$ be chosen so that $Z^{-1}(\eta)$ is very close to the maximum of the likelihood function.  Moreover, as we will soon demonstrate, the connection between nested sampling and our method provides substantial insight regarding the choice of $\eta$.

Theorem 2 provides an interesting generalization of the vertical-density representation of \citet{troutt:1993}.  
Let $f(x)$ be a density, and suppose that $X \sim f$.  Let $S_X(v) = \{x: f(x) > v  \}$, and let $\mu(A)$ denote the Lebesgue measure of the set $A$.  Then the random variable $Y \stackrel{d}{=} f(X)$, the vertical density ordinate of $X$, has density $g(v) = -v Z'(v)$, where $Z(v) = \mu\{S_X(v)\}$.  Theorem \ref{theorem:ordinatedistn} gives an analogous result for likelihood ordinate in the re-weighted joint distribution (\ref{eqn:slicejoint}), with the prior measure of the set $\{x: L(x) > y\}$ replacing Lebesgue measure.

\subsection{One route to implementation: slice sampling}

One possible approach for implementing the proposed method invokes an analogy with slice sampling.  That is, we may iterate between the conditionals $(u \mid x)$ and $(x \mid u)$ that arise from the weighted joint distribution in (\ref{eqn:weightedslice}), and that are given in (\ref{eqn:conditional1})--(\ref{eqn:conditional2}).  This imposes essentially the same requirements of ordinary slice sampling: one must sample from the prior $p(x)$, conditioned to the region $\{x: L(x) \geq u \}$.  See Section \ref{sec:mixing} for a discussion of the mixing properties of the resulting Markov chain.  We also note that as a byproduct, this algorithm can be used to provides an estimate of the entire curve $\hat{Z}(u)$.

The only additional requirement is the ability to sample from the density proportional to the weight function, right-truncated by the current value of the likelihood function.  In the case of a parametric weight function, this will involve sampling a truncated density proportional to $w(u)$.  We do not discuss this step at length, as it will be highly dependent on context.  A quite general approach for sampling from these conditional distributions will involve an application of the Metropolis--Hastings method, as in nested sampling.

For the default weight function described earlier, we have the following lemma that is applicable to cases where either $Z(y)$, or its inverse, can be calculated cheaply.  We omit the proof, which is simple algebra.
\begin{lemma}
\label{lemma:samplingu}
Let $W(u) =1/\max\{\eta, Z(u)\}$, and let $\pi_w(u \mid x)$ be the conditional distribution in (\ref{eqn:conditional2}).  Suppose that $T \sim \mbox{Unif}\left(0,\frac{1}{\max\{ \eta, Z(L(x)) \}} \right)$.  If $T \leq 1$, set $u = 0$.  Otherwise, set $u = Z^{-1} (1/T)$.  The resulting $u$ is a draw from $\pi_w(u \mid x)$.
\end{lemma}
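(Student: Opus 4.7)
The plan is to recognize the proposed algorithm as inverse-CDF sampling for $\pi_w(u \mid x)$, viewed as a distribution on $[0, L(x)]$.

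First, I would identify the target distribution. Writing $y = L(x)$, the conditional $\pi_w(u\mid x)$ has support $[0, y]$ with normalizer $W(y) = 1/\max\{\eta, Z(y)\}$. Because $W(0) = 1/\max\{\eta, Z(0)\} = 1/\max\{\eta, 1\} = 1 \neq 0$, the measure $w$ must contain a Dirac point mass of weight $1$ at $u = 0$ in addition to its absolutely continuous part on $(0,y]$. Consequently, the cumulative distribution function of $\pi_w(u\mid x)$ is
$$
F(u) \;=\; \frac{W(u)}{W(y)} \;=\; \frac{1/\max\{\eta, Z(u)\}}{1/\max\{\eta, Z(y)\}}, \qquad u \in [0,y],
$$
with a jump of height $F(0) = 1/W(y) = \max\{\eta, Z(y)\}$ at the origin.

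Second, I would reparametrize: let $V = T/W(y)$, so that $V \sim \mathrm{Unif}(0,1)$ whenever $T \sim \mathrm{Unif}(0, W(y))$. It then suffices to verify that the algorithm returns $F^{-1}(V)$. On the event $V \leq 1/W(y)$, equivalently $T \leq 1$, the inverse-CDF rule returns $u = 0$ because this is the probability mass carried by the atom; this matches the algorithm's first branch. On the event $V > 1/W(y)$, solving $F(u) = V$ amounts to solving $W(u) = V \cdot W(y) = T$. Since $1 < T \leq W(y) \leq 1/\eta$, the required $u$ lies in the region where $Z(u) \geq \eta$, so $W(u) = 1/Z(u)$, yielding $u = Z^{-1}(1/T)$, which is precisely the algorithm's second branch.

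The calculation is essentially bookkeeping, so there is no deep obstacle; the only mild subtlety is to confirm that the range of $T$ aligns correctly with the support of $\pi_w(u\mid x)$. Specifically, one must verify that $W(y) \geq 1$ (so the branch $T > 1$ is nonempty), that $1/T \in [\max\{\eta, Z(y)\}, 1]$ lies in the range of $Z$ so that $Z^{-1}(1/T)$ is well defined, and that the resulting $u$ satisfies $u \leq y$. All three follow immediately from $Z$ being nonincreasing with $Z(0) = 1$ together with $\eta \in (0,1]$ and $u \leq y$ iff $Z(u) \geq Z(y)$.
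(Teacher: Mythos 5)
Your proof is correct: the paper omits the proof of this lemma entirely (``We omit the proof, which is simple algebra''), and your inverse-CDF argument---including the observation that $W(0)=1$ forces an atom of mass $\max\{\eta,Z(L(x))\}/W(L(x))$ at $u=0$, matched by the branch $T\leq 1$---is evidently the intended ``simple algebra.'' The bookkeeping on the range of $T$ (that $W(L(x))\geq 1$, that $1/T$ lies in the range of $Z$ above $\eta$ so the truncation never bites, and that the output satisfies $u\leq L(x)$) is exactly the part worth writing down, and you have it right.
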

In cases where $Z(y)$ is not easily invertible, but can be evaluated for fixed $u$, the second step can be accomplished fairly cheaply using bisection \citep[Section II.2.2]{devroye:1984}.

To evaluate $Z(y)$, one must compute the prior measure of the set
$$
S(y) = \{x \in \mathcal{X} : L(x) > y \} \, .
$$
For this, we refer the reader to \citet[Section 3.4]{troutt:pang:hou:2004}, who give expressions for the set $S(y)$ for a very large number of common likelihood families, including the multivariate normal, exponential, and logistic distributions on $\mathbb{R}^d$; the multivariate uniform distribution on the sphere; the $L^p$-norm symmetric distributions; and multivariate extensions of the Pareto distribution.  The set $S(y)$, together with the quantile function of the prior $P(x)$, will often be sufficient to implement the weighted slice-sampling scheme just outlined.  See Section \ref{sec:examples} for examples.

\section{The connection with nested sampling}

\label{sec:nested}

\subsection{An alternate presentation of nested sampling}

In this section, we show that nested sampling \citep{skilling:2006} is equivalent to our approach, subject to a choice of the weight function $w(u)$ that nearly replicates our recommended choice.  Indeed, nested sampling can be thought of as one possible algorithm that sequentially generates an importance function satisyfing the score-function heuristic.  Moreover, as we will discuss below, the variation of nested sampling proposed by \citet{brewer:etal:2011} may be the most generally applicable algorithm for implementing our approach with the default weight function.  See also \citet{chopin:robert:2010}, who establish that the algorithm gives estimates of $\hat{Z}$ that are asymptotically Gaussian.

We first give a slightly modified presentation of nested sampling, which is more suited to our purposes than that of the original paper.  Start from the second identity for $Z$ derived in the introduction: $Z = \int_0^1 \Lambda(s) \ ds$.  Now imagine placing down a grid of ordered values $s_1 > s_2 > \cdots > s_n$, and approximating $Z$ using the Riemann sum
$$
\hat{Z} = \sum_{i=1}^n \Lambda(s_i) \{s_{i-1} - s_i\} \, ,
$$
with $s_0 \equiv 1$. The two key things that must be specified are the grid points and the values of the inverse cumulant $\Lambda$ at the grid points.

Nested sampling provides a systematic way of estimating $\Lambda(s)$ at the grid points $s_i = \exp(-i/K)$, where $K$ is modest (say, 20), and $i = 1, \ldots, n$ for large $n$ (say, 1000). This choice leads to the estimate
$$
\hat{Z} =  \sum_{i=1}^n \Lambda(e^{-i/K} ) \left[ e^{-(i-1)/K} - e^{-i/K} \right] \, .
$$
Notice that we accumulate contributions to the integral starting near $z=1$ and move left towards $z=0$, and that the grid points become exponentially closer together as we move toward zero.  (We describe the algorithm via a Riemann sum for the sake of simplicity, but in practice the trapezoid rule should be used instead.)  The key requirement of the algorithm is being able to take draws from the prior distribution $x \sim P$, conditional upon the likelihood of the sampled point exceeding a certain threshold.  The threshold itself is specified recursively, defining a family of distributions
\begin{equation}
\label{nested:nicksrecursion}
P_{(i)}(x) = \frac{P(x) \cdot \mathbb{I}\{ L(x) > y_{i-1} \} }{Z(y_{i-1})} \, .
\end{equation}

\paragraph{First step.}  The first step of nested sampling consists of the following substeps.
\begin{enumerate}
\item Take $K$ draws $x_1, \ldots, x_K$ from the prior $P$.
\item Evaluate their likelihoods, $L(x_1), \ldots, L(x_K)$.
\item Sort the likelihood values and label them $l_{(1)}, \ldots, l_{(K)}$.
\end{enumerate}
Note that substep 1 is the same as drawing from $P_1$ in (\ref{nested:nicksrecursion}), with $y_0 \equiv 0$.

We pay particular attention to the first order statistic (i.e.~the smallest likelihood ordinate), which we label as $y_1$.  A key identity for understanding the relevance of this quantity is the following.  If $X \sim P$ and $U \sim U(0,1)$, it is easy to show that
$$
\Lambda(U) \stackrel{D}{=} L(X) \, .
$$
Equivalently, if $X \sim P$, then
\begin{equation}
\label{eqn:ZLuniform}
Z[L(X)] \stackrel{D}{=} U(0,1) \, .
\end{equation}

Because $x_j \sim P$, Equation (\ref{eqn:ZLuniform}) says that the corresponding values $Z(L(x_j))$ are uniformly distributed.  Therefore, because $Z(y)$ is a decreasing function, $s_1 \equiv Z(y_1) \equiv Z(L(x_{(1)})$---that is, the upper prior cumulant of the smallest likelihood ordinate---is the \textit{maximal} order statistic of a sample of size $K$ from $U(0,1)$.  Therefore $s_1 \sim \mbox{Beta}(K,1)$, and
$$
E(s_1) = \frac{K}{K+1} \approx e^{-1/K} \, .
$$
Accordingly, in expectation,
$$
s_1 = Z(y_1) \approx e^{-1/K} \, ,
$$
or equivalently
$$
\Lambda(e^{-1/K}) \approx y_1 \, ,
$$
expressed in terms of the inverse cumulant function $\Lambda(s)$ defined in (\ref{eqn:LambdaZ}).  Thus $(e^{-1/K}, y_1)$ becomes the first ordinate-abscissa pair used in the Riemann sum to approximate $Z$.

\paragraph{Second step.} In the second step of nested sampling, we draw $K$ samples from $P_2(x)$: that is, the prior distribution, conditional upon $L(x) > y_1$ (\ref{nested:nicksrecursion}). This is efficient, as we only need to discard the point of minimum likelihood we used in step 1, and replace it with a new draw from the prior, conditioned to the region $\{x: L(x) > y_1\}$. The other $K-1$ points from the first step satisfy the constraint by construction.  Thus the required draw can be simulated by, for example, running several MCMC steps starting from one of the other $K-1$ points.

The distribution of $Z(L(x_j))$ is the same uniform distribution as in the first stage, conditioned to the region $\{ Z[L(x_j)] < e^{-1/K} \}$.  This is just a scaled uniform distribution:
$$
Z(L(x_j)) \sim U(0, e^{-1/K}) \, .
$$
Just as before, let $y_2$ be the smallest likelihood value among the sampled points: $y_2 \equiv l_{(1)} = \min\{L(x_1), \ldots, L(x_K)  \}$.  Again via (\ref{eqn:ZLuniform}), $Z(y_2)$ is the maximal order statistic of a sample of size $K$ from $U(0, e^{-1/K})$.  This is just a scaled beta distribution, with expectation $e^{-1/K} \cdot K/(K+1) \approx e^{-2/N}$.  Equivalently, in terms of the inverse cumulant,
$$
\Lambda(e^{-2/K}) \approx y_2 \, .
$$
Therefore the second likelihood ordinate $y_2$ is an estimate of the inverse cumulant $\Lambda(s)$ at the grid point $s_2 = e^{-2/N}$.

\paragraph{Subsequent steps.}  It is easy to check that the argument used to characterize the second step applies recursively to each subsequent step.  Each time we peel away a fraction of $e^{-1/N}$ (in expectation) from the right tail area of the random variable $Z(L(X))$.  Let $y_{i-1}$ be the minimal likelihood value used at step $i-1$.  At step $i$ we generate $N$ samples drawn from the prior distribution, conditioned to the region $\{\theta: L(\theta) > \lambda_{i-1}\}$.   We compute the likelihood values of these sampled points, take the minimal order statistic, and call this $y_i$.  This approximates the inverse cumulant $\Lambda(s)$ at the point $s_i = e^{-i/K}$.  After $n$ repetitions of the whole process, we form the estimated Riemann sum
$$
\hat{Z} = \sum_{i=1}^n y_i \{e^{-(i-1)/K} - e^{-i/K}\} 
$$
using the ordinate-abscissa pairs
\begin{equation}
\label{eqn:nestedLikelihoodDraws}
y_i \approx \Lambda(e^{-i/K})
\end{equation}
approximated at each step.

\subsection{The implied distribution of the likelihood ordinate}

A natural question is: what is the distribution of the likelihood ordinates $y_i$ calculated via nested sampling?  The answer is surprising: it is essentially the same as the distribution of likelihood ordinates implied by recommended weight function from Section \ref{sec:weightfunction}.  This provides us with an alternative interpretation of nested sampling, as a clever means of constructing an importance function whose likelihood-ordinate distribution satisfies the score-function heuristic.

We now sketch out a proof of this fact. Let $\gamma$ be so that $Z(\gamma)$ is exponentially small, such as $\log Z(\gamma) \approx -n/K$, where $n$ and $K \ll n$ are the same as those used in nested sampling.  Intuitively, this says that $\gamma$ is sufficiently high in likelihood-ordinate space so that the vast majority of all priors draws would have lower likelihood.  Define the cumulative weight function
\begin{equation}
\label{eqn:NestedWeightFunction}
W(u) = \frac{1}{Z(u) \log Z(\gamma)} 
\end{equation}
for $u \in [0, \gamma]$, and suppose that this weight function is used to derive an importance function, as in (\ref{eqn:marginal1}).  From Theorem 1, this implies that the corresponding density of the likelihood ordinate $Y \equiv L(X)$ is
\begin{equation}
\label{eqn:defaultYdensity}
f(y) = \frac{Z'(y)}{Z(y) \log Z(\gamma)} \; , \quad y \in [0, \gamma] \, .
\end{equation}

This choice attains the score-function heuristic on the range $y \leq \gamma$.  Moreover, we can verify that this is a properly normalized density function by observing that (\ref{eqn:defaultYdensity}) is the derivative of the function
$$
F(y) = \frac{\log Z(y)}{\log Z(\gamma)} \, .
$$
As $Z(y)$ is a nonincreasing function and $\log Z(\gamma) < 0$, $F(y)$ is nondecreasing.  Moreover, $F(0) = 0$ and $F(\gamma) = 1$.  Therefore $F(y)$ is a valid cumulative distribution function, and $f(y) = F'(y)$ is a valid density.

Let $y_q$ denote the $q$th quantile of $f(y)$ in (\ref{eqn:defaultYdensity}).  That is,
$$
q = P(Y \leq y_q) = \frac{\log Z(y_q) }{\log Z(\gamma)} = \frac{ \log Z(y_q)} {-n/K} \, .
$$
We may express this equivalently in terms of the inverse cumulant as
$$
Z(y_q) = e^{-qn/K} \, .
$$

We can now make the connection with nested sampling explicit.  Now suppose we take $n$ draws $y_1, \ldots, y_n$ directly from (\ref{eqn:defaultYdensity}), and sort the draws in increasing order $y_{(1)}, \ldots, y_{(n)}$.  Clearly the empirical estimate of the quantile $q_i = F(y_{(i)})$ is simply $q_i \approx i/n$.  Thus we have sorted likelihood ordinates $y_{(i)}$ for which
$$
Z(y_{(i)}) = e^{-q_i n/K} \approx e^{-i/K} \; , \quad i = 1, \ldots, n \, , 
$$
or equivalently,
$$
y_{(i)} \approx \Lambda(e^{-i/K}) \, ,
$$
which is identical to (\ref{eqn:nestedLikelihoodDraws}).  In summary, we have shown that using the weight function in (\ref{eqn:NestedWeightFunction}), which very nearly matches the recommended choice from Section \ref{sec:weightfunction}, generates likelihood ordinates and weights that are statistically identical to those generated by nested sampling.  

\citet{skilling:2006} gives examples where nested sampling cannot be expected to work well.  Although we do not pursue the point here, it is possible than an even more extreme choice of weight function may perform better in these situations.  See also \citet{brewer:etal:2011}.

\section{Examples}

\label{sec:examples}

\begin{figure}
\begin{center}
\includegraphics[width=4.0in]{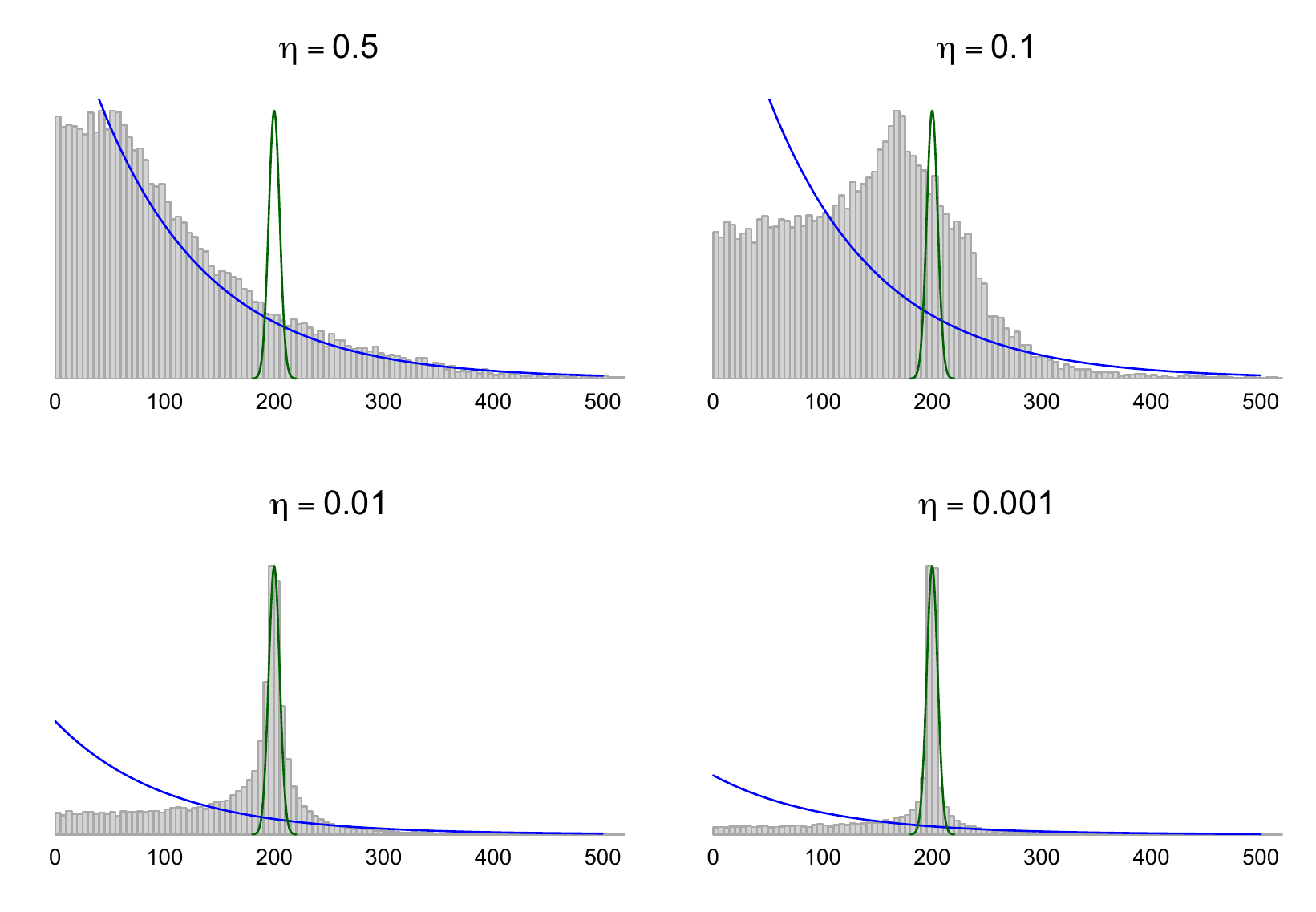}
\caption{\label{fig:normexp2} The implied importance functions for the normal-exponential example ($\tau=100$, $\sigma=5$) under the default weight function, with four different choices of $\eta$.  For $\eta$ near 1,  $\pi_w(x)$ closely resembles the prior.  As $\eta \to 0$, $\pi_w(x)$ becomes more peaked near the region of $L(x)$ is largest, yet always has tails like the prior distribution.  In each panel, the blue line show the density of the prior, while the green line shows the likelihood on a standardized scale.
}
\end{center}
\end{figure}

\subsection{Normal likelihood, exponential prior}
\begin{figure}
\includegraphics[width=6.0in]{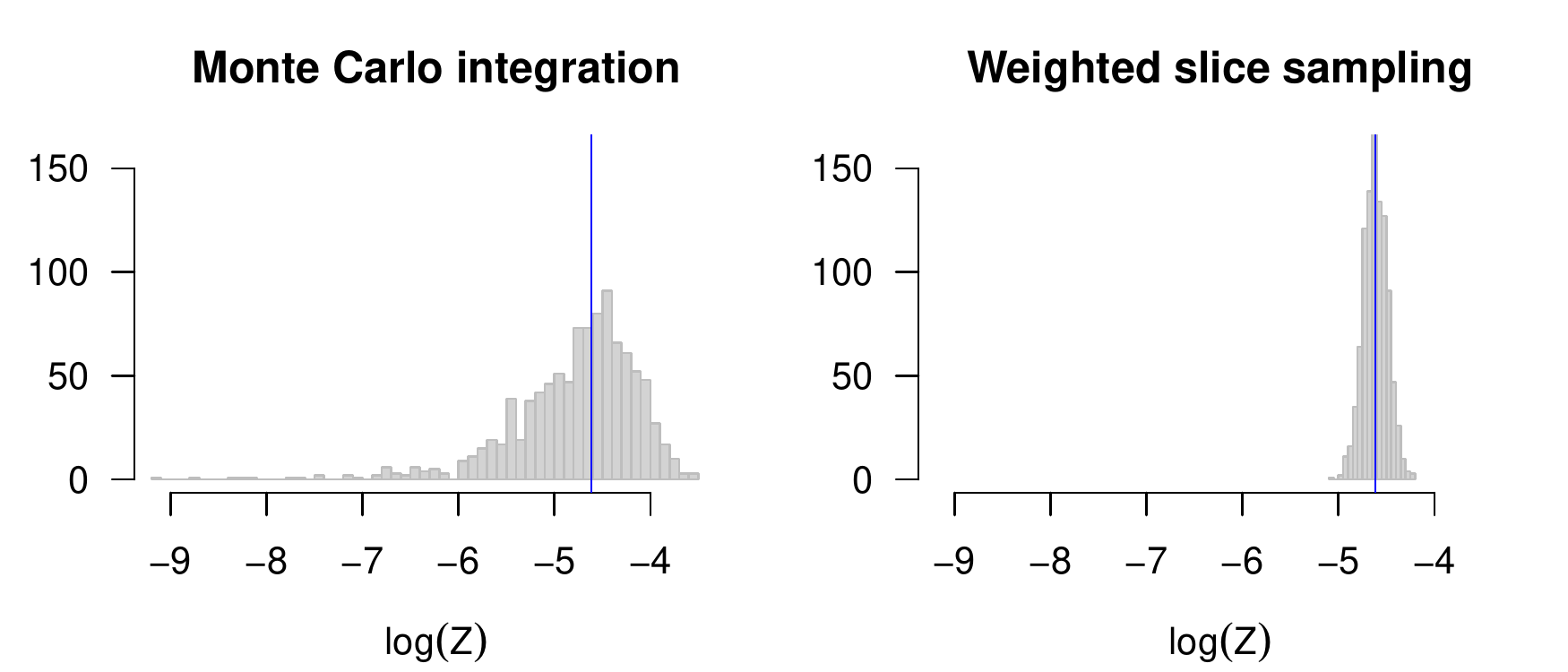}
\caption{\label{fig:normexp1} Monte Carlo variance for two estimators of $\log Z$ under 1000 repetitions.  Left: standard Monte Carlo integration with 10,000 draws from the prior.  Right: weighted slice sampling with the default weight function ($\eta = 10^{-4}$).  The vertical blue bar shows the true answer, $\log Z \approx -4.615$.  Each estimate used 10,000 Monte Carlo draws, with the first 500 discarded as burn-in under weighted slicing sampling.}
\end{figure}

As a simple toy problem, we consider the case of a sharp Gaussian likelihood integrated against a diffuse exponential prior.  Even in the one-dimensional case, however, this situation can pose difficulties for na\"ive Monte Carlo integration when the likelihood is sufficiently sharp.

Suppose that we observe data $a \sim N(x, \sigma^2)$.  To compute the slice region, we find the set of values $x$ such that $L(x) > y$, which happens whenever
$$
\frac{1}{\sqrt{2\pi \sigma^2}} \exp \left\{\frac{-(a - x)^2}{2\sigma^2} \right\} > y \, .
$$
This is equivalent to $x \in (l_y, a + \delta_y)$,
where
\begin{eqnarray*}
l_y &=& \max(0, a-\delta_y) \\
\delta_y &=& \sqrt{ -2 \sigma^2 \log(y \sqrt{2\pi\sigma^2})} \, ,
\end{eqnarray*}
leaving implicit the dependence of $\delta_y$ upon $\sigma$.

Now let $x \sim \mbox{Ex}(1/\tau)$; we imagine that the scale parameter $\tau$ is much larger than $\sigma$, such that draws from the prior are very unlikely to fall near a region of high likelihood.  To apply vertical-likelihood sampling, we first compute
$$
Z(y) = \int_{l_{y}}^{a+\delta_y} p(x) d x \, .
$$
For the exponential prior, this is available in closed form as
$$
Z(y) = \left\{
\begin{array}{ll}
1 - \exp \{-(a+\delta_y)/\tau \} \, , & a - \delta_\lambda \leq 0 \\
2 \exp(-a/\tau) \sinh(\delta_y/\tau) \, , & a - \delta_\lambda > 0 \, .
\end{array}
\right.
$$

This can also be inverted explicitly.  Let $c = 1 - \exp(-2a/\tau)$, and let $Z(y) = s$.  This may be solved for $y$ by first solving for $\delta_y$:
$$
\delta_y(s) = \left\{
\begin{array}{ll}
-\tau \log(1-s) \, , & s \geq c \\
\tau \sinh^{-1} \left\{ \frac{s}{2 \exp(-a/\tau)}  \right\}  \, , & s < c \, .
\end{array}
\right.
$$
From this, $y$ may easily be recovered via the relation $\delta_y^2  = -2 \sigma^2 \log(y \sqrt{2\pi\sigma^2})$.

Figure \ref{fig:normexp2} shows the implied importance functions corresponding to the weight function in (\ref{eqn:defaultWeightFunction}) for several different choices of $\eta$.  The prior is shown in blue, the likelihood in green, and the draws from $\pi_w(x)$ as a grey histogram.  Figure \ref{fig:normexp1} then shows the Monte Carlo variance of the estimator $\log \hat{Z}$, versus that of standard Monte Carlo integration with draws from the prior distribution.  Each estimate used 10,000 Monte Carlo draws, and the calculation was repeated 1000 times.  It is clear that the reweighting has significantly improved the accuracy and stability of the estimator, compared with na\"ive Monte Carlo integration.

\subsection{Multivariate $t$ likelihood, normal prior}

Our second example is to calculate the integral
$$
Z = \int_{\mathbb{R}^d}
\left ( 1 + \frac{ x^T x}{\nu} \right )^{ - \frac{1}{2} (\nu+d)}  \cdot  \left( \frac{\tau}{2 \pi} \right)^{d/2} \exp\{ - \tau  x^T x /2 \}  \ dx
$$
This is proportional to the normalizing constant of the posterior distribution arising from a multivariate Student-$t$ likelihood (ignoring the leading constants) and a mean-zero normal prior with prior precision $\tau I$.  We calculate this integral in $d=50$ dimensions, which is a challenge for most Monte Carlo methods.  This problem is a useful test case, because the exact value of $Z$ can be calculated using Kummer's confluent hypergeometric function of the second kind \citep[e.g.][Equation 13.1.3]{abramowitz1970}.  For details of calculating $Z$ using the Kummer function, see Appendix \ref{app:multT_example}.

We choose $\nu = 2$ and $\tau=1$, for which the correct answer is $Z \approx 1.95 \times 10^{-29}$.  For implementing weighted slice sampling, $Z(y)$ is easily derived in terms of the quantile and inverse quantile functions of the gamma distribution.

We used four Monte Carlo methods to approximate $Z$:
\begin{compactenum}
\item the harmonic mean estimator with 10000 samples and a burn-in of 1000.
\item Chib's estimator \citep{chib:1995} centered at $x^\star=0$, also known as Besag's candidate method, with 10000 samples and a burn-in period of 1000.
\item Nested sampling with $K=50$ and $n=10000$.
\item Vertical-likelihood Monte Carlo with the default weight function derived from the score function of $Z(y)$ (with $\eta = 0.01$), implemented via weighted slice sampling with 10000 samples and a burn-in of 1000.
\end{compactenum}
We repeated each method 100 times, to assess the root mean-squared Monte Carlo error (RMSE).

\begin{table}
\begin{center}
\caption{\label{tab:multivariateTresults}
Average estimate and root mean-squared error in 100 repetitions of each method for the multivariate $T$ example.  The true answer is $Z \approx 1.95 \times 10^{-29}$.  Weighted slice sampling has the smallest mean-squared error over 100 Monte Carlo repetitions, improving by a relative factor of $\approx 87\%$ over nested sampling.
}
\vspace{0.5\baselineskip}
\begin{tabular}{r r r r r r}
	&& Harmonic & Chib & Nested & Weighted slice \\
	\midrule
Average && $1.26 \times 10^{-13}$ & $1.35 \times 10^{-27}$ & $ 2.52 \times 10^{-29}$ & $1.61 \times 10^{-29}$ \\
RMSE && $6.87 \times 10^{-13}$       & $1.50 \times 10^{-27}$ & $ 1.87 \times 10^{-29}$ & $9.98 \times 10^{-30}$
\end{tabular}
\end{center}
\end{table}

The results are summarized in Table \ref{tab:multivariateTresults}.  The harmonic mean estimator is not competitive, for reasons that are widely known and discussed at length in \citet{wolpert:schmidler:2012}.  In this case, it is off by 16 orders of magnitude on average.  The Chib--Besag estimator is much better, but is still off by 2 orders of magnitude on average.  Nested sampling does much better, and indeed very well in objective terms.  Its average answer comes within $\approx 25\%$ of the true value, and its root mean-squared error is of the same order of magnitude as the average answer, suggesting that it rarely missed by much.  Vertical likelihood Monte Carlo, as implemented via weighted slice sampling, performs the best on this example, with an RMSE that is about $45\%$ smaller than that of nested sampling, in relative terms.

\section{Mixing properties}

\label{sec:mixing}

We have seen that both weighted slice sampling and nested sampling are two methods for operationalizing the score-function heuristic described earlier.  One major difference of weighted slice sampling is that, at any given iteration, the slice sampler can move up or down in likelihood-ordinate space.  Nested sampling only ever moves up.  As a result, slice sampling can be expected to mix better over likelihood ordinates.  An interesting comparison is with the diffusive nested sampling algorithm of \citet{brewer:etal:2011}, which also allows one to move down in likelihood space, and which seems to perform favorably versus ordinary nested sampling.

In fact, the mixing properties of weighted slice sampling can be characterized quite precisely.  The sampling scheme outlined in Section \ref{sec:weightfunction} produces samples $ ( X_{n} , U_{n} )$, and gives rise to a marginal chain $ \{ L(X_n)\}$.  Following results from \citet{roberts:rosenthal:2002}, the convergence properties of this marginal chain are governed by the transition kernel for a pair of likelihood ordinates $(y,z) \in L(\mathcal{X})$.  This may be calculated via
$$
\mathbb{P} \left\{ L( X_{n+1} ) \geq z \mid L(X_n)=y \right\}
 = \int_{ U_{n+1} } \mathbb{P} \left\{ L( X_{n+1} ) \geq z \mid  U_{n+1} \right\} p\left\{ U_{n+1} \mid L(X_n)=y \right\} d U_{n+1} \, .
$$
Recall that
$$
Z(u) = \int_{ x: L(x)>u } P(x) d x \; \; {\rm and} \; \; W(u) = \int_0^u w(s)d s \; .
$$
The weighted slice sampler has the following joint, marginal and conditional distributions
$$
P_{WS} ( x , u ) = w(u) \mathbb{I} \left (0<u<L(x)\right )P(x) / Z_w 
$$
where the normalization constant can be computed two ways
$$ 
Z_w = \int_{\mathcal{X}} W(L(x))P(x) d x = \int_0^\infty w(u) Z(u) du \, .
$$ 

The following lemma characterizes the transition kernel.
\begin{lemma}
For a general weighted slice sampler with weight function $w(u)$ and corresponding cumulative weight function $W(u)$, the transition kernel of the likelihood-ordinate chain is given by
$$
\mathbb{P} \left ( L( X_{n+1} ) \leq z \mid L(X_n)=y \right ) = \int_0^{ y \wedge z} \left \{ 1 - \frac{Z(z)}{Z(u)} \right \} \frac{\omega(u)}{\Omega(y)} d u \, .
$$
\end{lemma}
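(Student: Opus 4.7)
The chain alternates between the conditionals (\ref{eqn:conditional1}) and (\ref{eqn:conditional2}), so a one-step update from $X_n$ to $X_{n+1}$ goes through the auxiliary slice variable $U_{n+1}$. The natural route is therefore to condition on $U_{n+1}$, compute the two factors separately, and then check that the integrand is supported on $[0,y\wedge z]$. I will work with $\mathbb{P}(L(X_{n+1})\le z \mid L(X_n)=y)$ directly, splitting by the value of $u=U_{n+1}$.

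\textbf{Step 1: identify the two pieces.} From (\ref{eqn:conditional2}), given $L(X_n)=y$, the slice variable $U_{n+1}$ has density $w(u)/W(y)$ on $[0,y]$. From (\ref{eqn:conditional1}), given $U_{n+1}=u$, the next state $X_{n+1}$ is distributed as the prior restricted to $\{x:L(x)\ge u\}$, so that the prior measure of the set $\{x:L(x)\ge t\}\cap\{x:L(x)\ge u\}$ divided by $Z(u)$ gives the tail probability of $L(X_{n+1})$.

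\textbf{Step 2: evaluate the inner probability.} Conditional on $U_{n+1}=u$, we have $L(X_{n+1})\ge u$ almost surely, so
\[
\mathbb{P}\bigl(L(X_{n+1})\le z\mid U_{n+1}=u\bigr)
=\begin{cases}
0, & u>z,\\[2pt]
\dfrac{Z(u)-Z(z)}{Z(u)}=1-\dfrac{Z(z)}{Z(u)}, & u\le z,
\end{cases}
\]
since $\{L(X_{n+1})\in[u,z]\}$ has prior measure $Z(u)-Z(z)$ when $u\le z$, while the event is empty when $u>z$.

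\textbf{Step 3: assemble by total probability.} Integrating against the conditional density of $U_{n+1}$ yields
\[
\mathbb{P}(L(X_{n+1})\le z\mid L(X_n)=y)
=\int_0^{y} \mathbb{P}(L(X_{n+1})\le z\mid U_{n+1}=u)\,\frac{w(u)}{W(y)}\,du,
\]
and the indicator from Step 2 truncates the upper limit from $y$ to $y\wedge z$, producing exactly the stated formula (identifying $\omega=w$ and $\Omega=W$).

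\textbf{Remark on difficulty.} There is no real obstacle; the only subtle point is handling the case $u>z$, where the event $\{L(X_{n+1})\le z\}$ is incompatible with the slice constraint $L(X_{n+1})\ge u$ and so contributes zero. This is precisely what collapses the outer integral to $[0,y\wedge z]$, and it is worth flagging explicitly so the reader sees why the upper limit takes the minimum form rather than simply $y$ or $z$.
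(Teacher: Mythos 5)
Your proof is correct and follows essentially the same route as the paper's: condition on the slice variable $U_{n+1}$ with density $w(u)/W(y)$, use $\mathbb{P}(L(X_{n+1})\le z\mid u)=1-Z(z)/Z(u)$ for $u\le z$, and integrate. Your Step 2 simply makes explicit the truncation to $[0,y\wedge z]$ (and the identification $\omega=w$, $\Omega=W$) that the paper leaves implicit.
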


This follows directly from $ \pi(u \mid L(x)=y) = w(u)/W(y)$ and 
\begin{align*}
\mathbb{P} \left\{ L( X_{n+1} ) \leq z \mid L(X_n)=y \right\} & = \int_0^{ y \wedge z} F(L(x) < z \mid  u) \ \pi( u \mid L(x)=y) d u \\
& =  \int_0^{ y \wedge z} \left \{ 1 - \frac{Z(z)}{Z(u)} \right \} \frac{w(u)}{W(y)} d u \, .
\end{align*}

The following corollary is derived straightforward from this lemma.
\begin{corollary}
Suppose that the choice of weight function $ W(u)=Z(u)^{-1}$ defines a proper joint distribution. Then the corresponding transition matrix
reduces to
$$
\mathbb{P} \left\{ L( X_{n+1} ) \leq z \mid L(X_n)=y \right\} = \frac{ W( y \wedge z)}{ W(y)} - 
\frac{Z(z) W( y \wedge z)^2}{2 W(y) } 
$$
For $ z < y$ we have
\begin{align*}
\mathbb{P} \left\{ L( X_{n+1} ) \leq z \mid L(X_n)=y \right\} &= \frac{1}{2} \frac{Z(y)}{Z(z)} \\
\mathbb{P} \left\{ L( X_{n+1} ) \leq y \mid L(X_n)=y \right\}  &= \frac{1}{2}  \quad \forall y \quad {\rm and} \quad \forall P(x) \, .
\end{align*}
\end{corollary}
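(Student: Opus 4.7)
The plan is to obtain the main identity by direct substitution into the kernel formula of the preceding lemma, and then to read off the two special cases by elementary algebra. Starting from
$$
\mathbb{P}\{L(X_{n+1}) \leq z \mid L(X_n)=y\} = \int_0^{y \wedge z} \left\{1 - \frac{Z(z)}{Z(u)}\right\} \frac{w(u)}{W(y)}\, du,
$$
I will split the right-hand side into two pieces:
$$
\frac{1}{W(y)} \int_0^{y \wedge z} w(u)\, du \;-\; \frac{Z(z)}{W(y)} \int_0^{y \wedge z} \frac{w(u)}{Z(u)}\, du .
$$
The first integral is immediate from $W(u) = \int_0^u w(s)\,ds$ and evaluates to $W(y \wedge z)$, producing the first summand $W(y \wedge z)/W(y)$.

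For the second integral, the key observation is that when $W(u) = Z(u)^{-1}$ we have $Z(u) = 1/W(u)$, so that $w(u)/Z(u) = w(u) W(u) = W'(u) W(u) = \tfrac{d}{du}\bigl[\tfrac{1}{2} W(u)^2\bigr]$. Integrating exactly gives $\tfrac{1}{2} W(y \wedge z)^2$ (handling the lower endpoint as sketched below), which yields the second summand $Z(z) W(y \wedge z)^2 / \bigl(2 W(y)\bigr)$ and hence the full formula stated in the corollary.

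Once the main identity is in hand, the two special cases follow by plugging in. For $z < y$, we have $y \wedge z = z$ and $W(z) = 1/Z(z)$, so $Z(z) W(z)^2 = 1/Z(z) = W(z)$; the two summands combine as $W(z)/W(y) - W(z)/(2W(y)) = W(z)/(2W(y)) = Z(y)/(2 Z(z))$. For $z = y$, we similarly get $W(y)/W(y) - Z(y) W(y)^2/(2W(y)) = 1 - Z(y) W(y)/2 = 1 - 1/2 = 1/2$, which, crucially, is independent of $y$ and of the prior $P(x)$.

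The main technical wrinkle will be the boundary at $u = 0$: formally $W(u) = Z(u)^{-1}$ gives $W(0^+) = 1/Z(0) = 1$, which is incompatible with the convention $W(0) = \int_0^0 w = 0$ under which the lemma is stated. I expect this to be benign, since the corollary already assumes that the chosen weight defines a \emph{proper} joint distribution — so an implicit truncation or normalization is in force — and in any event both boundary contributions ($W(0)$ in the first integral and $\tfrac{1}{2} W(0)^2$ in the second) cancel in the final expression. I would therefore handle the boundary by stating the assumption explicitly, applying the fundamental theorem of calculus on $(0, y \wedge z]$ under this convention, and noting that the resulting formula does not depend on how the point mass at $u = 0$ is treated.
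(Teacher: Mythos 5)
Your derivation is correct and is precisely the ``straightforward'' substitution the paper intends (the paper states this corollary without writing out a proof, citing only the preceding lemma): splitting the kernel integral, evaluating $\int_0^{y\wedge z} w(u)\,du = W(y\wedge z)$ and $\int_0^{y\wedge z} w(u)W(u)\,du = \tfrac12 W(y\wedge z)^2$ via $w/Z = wW = \tfrac{d}{du}[\tfrac12 W^2]$, and then specializing to $z<y$ and $z=y$ reproduces every displayed identity. One caveat on your boundary remark: the $u=0$ contributions do not literally ``cancel'' --- under the atom-at-zero reading forced by $W(0^+)=1/Z(0)=1$ (cf.\ Lemma \ref{lemma:samplingu}, where $T\le 1$ maps to $u=0$) the first integral picks up the atom but the second picks up an extra $\tfrac12(1+Z(z))$-type term, so the diagonal probability comes out as $\tfrac12-\tfrac12 Z(y)^2$ rather than exactly $\tfrac12$; the corollary's formulas are therefore exact only under the $W(0)=0$ antiderivative convention you adopt, and otherwise hold up to $O(Z)$ corrections --- an imprecision inherited from the paper's own statement rather than introduced by your argument.
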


Hence our score-function heuristic produces a weight function with the stabilizing property that, on the next step of the marginal chain, one is equally likely to move up or down in likelihood-ordinate space.  From the results of \citet{roberts:rosenthal:2002}, it follows that the likelihood-ordinate chain is geometrically ergodic.  Note that when $ p(x)/Z\{L(x)\}$ is not integrable, we instead use the table-mountain-hat function $ W(u)= \min\{ \epsilon^{-1}, Z(u)^{-1} \}$.

\section{Discussion}

\label{sec:discussion}

Our review has sought to provide an answer to the question: what should the distribution of $Y\equiv L(X)$ look like under draws from the proposal $X \sim g(x)$ in importance sampling? We have provided a specific recommendation in the form of the score-function heuristic, and that both nested sampling and weighted slice sampling are methods for putting this principle into practice.  We also demonstrate a number of previously unappreciated connections among a wide class of methods---for example, by showing that the harmonic-mean estimator, the power-posterior method, energy-level methods from statistical physics, and nested sampling can all be characterized in terms of different choices for a one-dimensional weight function $w(u)$.  Of course, many practical details of implementation of the vertical-likelihood principle remain to be studied.  The ability of weighted slice sampler to mix so rapidly over likelihood-ordinate space may account for its slightly better performance on the multivariate-$t$ example.

We conclude by highlighting two possible extensions of the approach that can inform future work.  These two extensions, along with the convergence analysis of Section \ref{sec:mixing}, highlight one of the advantages of our approach: it inherits all of the tricks and theoretical machinery that have grown up around MCMC-based methods.  

First, it is common to use multiple slice variables as part of an MCMC---for example, if the likelihood can be factorized as $L(x) = L_1(x) L_2(x)$.  In this case, one could use the representation
$$
Z = \int_\mathcal{X} \left\{ \int_0^{L_1(x)} du_1 \right\} \left\{ \int_0^{L_2(x)} du_2 \right\} \ dP(x) \, ,
$$
and reweight the slice variables $(u_1, u_2)$ either independently or jointly.  Although we do not pursue the point here, this could potentially allow the approach to be extended to a much richer class of models with complicated likelihood functions \citep[e.g.][]{higdon:1998}.

Second, many complicated likelihoods have representations as mixtures of simpler densities, such as the multivariate normal.  Because it is an MCMC-based method, our approach can easily accommodate these extra latent variables.  For example, suppose that $L(x)$ has the representation
$$
L(x) = \int_0^{\infty} \exp(-cx^2/2) \ \pi(c) \ dc \, .
$$
Our approach can be modified accordingly.  Let $L(x,c) = \exp(-cx^2/2)$, and write Z as
\begin{eqnarray*}
Z &=& \int_\mathcal{X} \int_0^\infty  L(x,c)  \ \pi(c)  \ p(x) \ dc \ dx \\
&=& \int_\mathcal{X} \int_0^\infty  \left\{ \int_0^{L(x,c)} du \right\} \ \pi(c)  \ p(x) \ dc \ dx \, .
\end{eqnarray*}
We can now re-weight the slice variable $u$ according to the inverse of the individual slice normalization constants $Z(u,c)$, and proceed as before.

\appendix

\section{Proof of Theorem \ref{theorem:ordinatedistn}}

Let $m$ denote the maximum value of the likelihood, possibly infinite.  Let $F(\lambda)$ denote the CDF of the likelihood ordinate under $\pi_w$.  Clearly
\begin{align*}
F(\lambda) 
&= 1 - \mathbb{P}_{\pi_w} \left ( L(x) > \lambda \right ) \\
& = 1 - Z_w^{-1} \int_{L(x)> \lambda} L(x) \pi(x) d x = 1 - Z_w^{-1} \int_{L(x)> \lambda} \left \{ \int_0^{L(x)} w (s) ds \right \}  \pi( d x ) \\
& = 1 - Z_w^{-1} \int_0^m \left \{ \int_{ L(x) > \max(\lambda,s) } \pi(d x)  \right \} w(s) ds\\
& = 1 - Z_w^{-1} \left ( W(\lambda)  Z_w(\lambda) + \int_\lambda^m  \left \{ \int_{ L(x) > s } \pi(d x)  \right \} w (s) ds \right )\\
& = 1 - Z_w^{-1} W(\lambda) Z_w(\lambda) - Z_w^{-1} \int_\lambda^m w(s ) Z_w(s) ds \, .
\end{align*}
The ranges $ \lambda \leq L(x) $ and $ s \leq L(x) $ imply $ \max ( \lambda ,s ) \leq L(x) $ and $ 0< s< m$.
On the range $0<s<\lambda$ we have  $ \int_{ L(x) > \max(\lambda,s) } \pi(d x)  = Z_w(\lambda) $ as $ \max(\lambda,s)=\lambda$. The integral
over this range is $ W(\lambda) Z_w(\lambda)$. 

Differentiation then gives the result: $ W^\prime (\lambda) = w_\lambda $, and so
$$
\frac{d}{d\lambda}\left \{ W(\lambda) Z(\lambda) - \int_\lambda^m w(s ) Z(s) ds \right \}  
 = W(\lambda) Z^\prime (\lambda)
$$
as required.

\section{Multivariate $t$ example: details}
\label{app:multT_example}

The integral to be calculated is
$$
Z = \int_{\mathbb{R}^d}
\left ( 1 + \frac{ x^T x}{\nu} \right )^{ - \frac{1}{2} (\nu+d)}  \cdot  \left( \frac{\tau}{2 \pi} \right)^{d/2} \exp\{ - \tau  x^T x /2 \}  \ dx
$$
This is proportional to the normalizing constant of the posterior distribution arising from a multivariate Student-$t$ likelihood and a mean-zero normal prior with prior precision $\tau I$.  

Let $a=(\nu+d)/2$.  To calculate this integral in terms of hypergeometric functions, we exploit the following facts.  First,
$$
\left ( 1 + \frac{ x^T x}{\nu} \right )^{ - a}  = 
\int_0^{\infty} \exp\left( -\lambda \frac{x^T x}{\nu} \right) p(\lambda) \ d \lambda  \, ,
$$
where
$$
p(\lambda) = \frac{ \lambda^{a - 1} e^{-\lambda} } {\Gamma(a)} \, .
$$

By Fubini's theorem, we may therefore write $Z$ as
$$
Z = \left( \frac{\tau}{2 \pi} \right)^{d/2}   \int_{0}^{\infty} \int_{\mathbb{R}^d}   \exp \left( -\lambda \frac{x^T x}{\nu} \right)
 \frac{ \lambda^{a - 1} e^{-\lambda} } {\Gamma(a)} \
 \exp\left( - \frac{\tau}{2} x^T x \right)
   \ dx \ d\lambda \, .
$$
Integrating first over $x$, we find that
\begin{eqnarray*}
Z &=&  \left( \frac{\tau}{2 \pi} \right)^{d/2} \cdot \frac{1 } {\Gamma(a)} \int_0^{\infty} \lambda^{a-1} e^{-\lambda}
 \int_{\mathbb{R}^d} \exp \left\{  -\frac{1}{2} \left( \frac{2\lambda}{\nu} + \tau \right) x^Tx \right\} \ dx \ d \lambda \\
 &=& \frac{1 } {\Gamma(a)} \int_0^{\infty} \left( \frac{2\lambda}{\nu \tau}  +1\right)^{-d/2} \lambda^{a-1} e^{-\lambda} d \lambda
\end{eqnarray*}

Make a change of variables to $t=2\lambda/(nu\tau)$, and let $b=\nu/2 + 1$, $s=\nu\tau/2$.  Then
\begin{eqnarray*}
Z &=& \frac{s^a } {\Gamma(a)} \int_0^{\infty} \left( t  +1\right)^{b-a-1} t^{a-1} e^{-s t} d t \\
&=& s^a U(a,b,s) \, ,
\end{eqnarray*}
where $U(a,b,s)$ is Kummer's confluent hypergeometric function of the second kind.  The choice $\nu = 2$, $\tau=1$, and $d=50$ ensures that we evaluate the Kummer function at integer arguments, for which accurate numerical routines exist.

\section{Adaptive Wang--Landau}

The goal of the adapative Wang-Landau algorithm is to try to learn the importance function: the target distribution $ \tilde{P}$ in the MCMC is always changing.  The adaptive Wang-Landau algorithm samples from
$$ \tilde{P}(x) = P(x) \times \frac{1}{d} \sum_{i=1}^d \frac{\mathbb{I}_{\mathcal{X}_i}(x)}{\int_{\mathcal{X}_i} P(x) dx},$$
where $d$ is the number of energy levels and $\mathcal{X}_i$ are a partition of the sample space $\mathcal{X}$. In other words, it tries to uniformly visit energy levels. For the goal, one needs to estimate the quantities
$$ \theta_i = \int_{\mathcal{X}_i} d P(x) \, .$$
The adaptive Wang-Landau algorithm initially sets $\theta_i^{(0)} \propto 1$ for all $i$. After sampling $X_t$ at $t$-th iteration, $\theta_i$ are updated via
$$ \log \theta_i^{(t)} = \log \theta_i^{(t-1)} + \gamma_t \{ \mathbb{I}_{\mathcal{X}_i} (X_t) - 1/d \}$$
for some $\gamma_t$.

\end{spacing}

\begin{small}
\singlespacing
\bibliographystyle{abbrvnat}
\bibliography{masterbib}
\end{small}

\end{document}